\newcommand{\rmd}{\mathrm{d}}
\newcommand{\eqsp}{\;}
\newcommand{\1}{\mathrm{1}}
\newcommand{\E}{\mathbb{E}}
\newcommand{\qk}{q_{k}}
\newcommand{\mw}{\mathsf{w}}
\newcommand{\U}{\mathsf{U}}
\newcommand{\Lo}{\mathsf{L}}
\newtheorem{lemma}{Lemma}
\newtheorem{proposition}{Proposition}
\newcommand{\hQ}{\widehat{Q}}
\newcounter{hypA}
\newenvironment{hypA}{\refstepcounter{hypA}\begin{itemize}
\item[{\bf H\arabic{hypA}}]}{\end{itemize}}
\newcounter{defcounter}
\newenvironment{assumpt}{%
\addtocounter{equation}{-1}
\refstepcounter{defcounter}

\begin{equation}}
{\end{equation}}
\begin{document}

\author{Pierre Gloaguen\footnotemark[1] \and Marie-Pierre Etienne\footnotemark[1] \and Sylvain Le {C}orff\footnotemark[2]}
 
\footnotetext[1]{AgroParistech, UMR MIA 518, F-75231 Paris, France.}
\footnotetext[2]{Laboratoire de Math\'ematiques d'Orsay, Univ. Paris-Sud, CNRS, Universit\'e Paris-Saclay.}

\title{Online Sequential Monte Carlo smoother for partially observed stochastic differential equations}

\lhead{Gloaguen et al.}
\rhead{Particle smoother for SDE}
\date{}

\maketitle

\begin{abstract}
This paper introduces a new algorithm to approximate smoothed additive functionals for partially observed stochastic differential equations. This method relies on the recent procedure introduced in \cite{olsson:westerborn:2016} which allows to compute such approximations online, i.e. as the observations are received, and with a computational complexity growing linearly with the number of Monte Carlo samples. The algorithm of \cite{olsson:westerborn:2016} cannot be used in the case of partially observed stochastic differential equations since the transition density of the latent data is usually unknown. We prove that a similar algorithm may still be defined for partially observed continuous processes by replacing this unknown quantity by an unbiased estimator obtained for instance using general Poisson estimators. We prove that this estimator is consistent and its performance are illustrated using data from two models. 
\end{abstract}

{\bf Keywords}: Stochastic differential equations, Smoothing, Sequential Monte Carlo Methods. 

\section{Introduction}
This paper introduces a new algorithm to solve the smoothing problem for partially observed continuous time stochastic processes. In this setting, the hidden state process $(X_t)_{t\ge 0}$ is assumed to be a solution to a stochastic differential equation (SDE) and the only information available is given by noisy observations $(Y_{k})_{0\le k\le n}$ of the states $(X_k)_{0\le k\le  n}$ at some discrete time points $(t_k)_{0\le k\le n}$. The bivariate stochastic process $\{(X_{k},Y_{k})\}_{0\le k\le n}$ is a state space model such that conditional on the state sequence $(X_{k})_{0\le k\le n}$ the observations $(Y_{k})_{0\le k\le n}$ are independent and for all $0\le \ell\le n$ the conditional distribution of $Y_{\ell}$ given $\{X_{k}\}_{0\le k\le n}$ depends on $X_{\ell}$ only.

Statistical inference for partially observed state sequences often requires to solve bayesian filtering and smoothing problems, i.e.\ the computation of the posterior distributions of sequences of hidden states given observations. The filtering problem refers to the estimation, for each $0\le k \le n$,  of the distributions of the hidden state $X_k$ given the observations $(Y_0,\ldots,Y_k)$. Smoothing stands for the estimation of the distributions of the sequence of states $(X_{k},\ldots,X_{p})$ given observations $(Y_{0},\ldots,Y_{\ell})$ with $0\le k\le p \le \ell \le n$. 
These posterior distributions are crucial to compute maximum likelihood estimators of unknown parameters using the observations $(Y_0,\ldots,Y_n)$ only. For instance, the E-step of the EM algorithm introduced in \cite{dempster:laird:rubin:1977}  boils down to the computation of a conditional expectation of an additive functional of the hidden states given all the observations up to time $n$. Similarly, by Fisher's identity, recursive maximum likelihood estimates may be computed using the gradient of the loglikelihood which can be written as the conditional expectation of an additive functional of the hidden states.
 See \cite[Chapter $10$ and $11$]{cappe:moulines:ryden:2005}, \cite{kantas:doucet:signh:2015,lecorff:fort:2013a,lecorff:fort:2013b,doucet:poyiadjis:singh:2011}
for further references on the use of these smoothed expectations of additive functionals  applied to maximum likelihood parameter inference in latent data models.

The exact computation of these expectations is usually not possible in the case of partially observed diffusions. In this paper, we propose to use Sequential Monte Carlo (SMC) methods to approximate smoothing distributions with random particles associated with importance weights.
 \cite{gordon:salmond:smith:1993,kitagawa:1996} introduced the first particle filters and smoothers for state space models by combining importance sampling steps to propagate particles with resampling steps to duplicate or discard particles according to their importance weights. Unfortunately, these methods cannot be applied directly to partially observed stochastic differential equations since some elementary quantities, such as transition densities of the hidden states, are not available explicitly. Discretization procedures may be used to approximate transition densities, for instance the Euler-Maruyama method, the Ozaki discretization which proposes a linear approximation of the drift coefficient between two observations \cite{ozaki:1992,shoji:1998}, or Gaussian based approximations using Taylor expansions of the posterior mean and variance of an observation given the observation at the previous time step, \cite{kessler:1997,kessler:lindner:sorensen:2012,uchida:yoshida:2012}. Other approaches based on Hermite polynomials expansion were also introduced by \cite{ait-sahalia:1999,ait-sahalia:2002,ait-sahalia:2008} and extended in several directions recently, see \cite{li:2013} and all the references on the approximation of transition densities therein. However, even the most recent discretization based approximations of the transition densities induce a systematic bias of particle based approximations of posterior distributions, see for instance \cite{delmoral:jacod:protter:2001}. To overcome this difficulty, \cite{fearnhead:papaspiliopoulos:roberts:2008} proposed to solve the filtering problem by combining SMC methods with an unbiased estimate of the transition densities based on the generalized Poisson estimator (GPE).  In this case, only the Monte Carlo error has to be controlled as there is no Taylor expansion to approximate unknown transition densities.
 
 The only solution to solve the smoothing problem for partially observed SDE using SMC methods has been proposed in \cite{olsson:strojby:2011} and extends the fixed-lag smoother of \cite{olsson:cappe:douc:moulines:2008}. 
 Using forgetting properties of the hidden chain, the algorithm improves the performance of \cite{fearnhead:papaspiliopoulos:roberts:2008} to approximate smoothing distributions but at the cost of a bias that does not vanish as the number of particles grows to infinity.
In the case of discrete time state space models, approximations of the smoothing distributions may also be obtained using the Forward Filtering Backward Smoothing algorithm (FFBS) and  the Forward Filtering Backward Simulation algorithm (FFBSi) developed respectively in \cite{kitagawa:1996,huerzeler:kunsch:1998,doucet:godsill:andrieu:2000} and \cite{godsill:doucet:west:2004}. 
Both algorithms require first a forward pass which produces a set of particles and weights approximating the sequence of filtering distributions up to time $n$. Then, a backward pass is performed to compute new weights (FFBS) or sample trajectories (FFBSi) in order to approximate the smoothing distributions. Recently, \cite{olsson:westerborn:2016} proposed a new SMC algorithm, the particle-based rapid incremental smoother (PaRIS), to approximate on-the-fly (i.e.\ using the observations as they are received) smoothed expectations of additive functionals. 
Unlike the FFBS algorithm, the complexity of this algorithm grows only linearly with the number of particles $N$ and contrary to the FFBSi algorithm, no backward pass is required. 

In this paper, we extend the use of  PaRIS algorithm to partially observed SDE. 
The proposed algorithm allows to approximate smoothed expectations of additive functionals online and with a complexity growing only linearly with the number of particles. The crucial and simple result (Lemma~\ref{lem:AR:unbiased}) of the application of PaRIS algorithm to SDE is that the accept reject mechanism introduced in \cite{douc:garivier:moulines:olsson:2011} ensuring the linear complexity of the procedure is still correct when the transition densities are replaced by unbiased estimates. The usual FFBS and FFBSi algorithms may not be extended this easily since they both require the computation of weights defined as ratios involving the transition densities, thus replacing these unknown quantities by unbiased estimates does not lead to unbiased estimators of the weights. The proposed Generalized Random version of PaRIS algorithm, hereafter named GRand PaRIS algorithm, may be applied to general hidden Markov models whose Markovian dynamics is ruled by a stochastic differential equation (one of the first two domains defined in \cite{beskos:papaspiliopoulos:roberts:fearnhead:2006}) but also to any general state space model where the transition density of the hidden chain may be estimated unbiasedly.

Section~\ref{sec:rwparis} describes the proposed algorithm to approximate smoothed additive functionals using unbiased estimates of the transition density of the hidden states and details the application of this algorithm when the transition density may be approximated using a GPE. 
In Section~\ref{sec:convergence}, classical convergence results for SMC smoothers are extended to the setting of this paper and illustrated with numerical experiments in Section~\ref{sec:exp}. 
All proofs are postponed to Appendix~\ref{sec:append:proofs}.

\section{The Generalized Random PaRIS algorithm}
\label{sec:rwparis}
$(X_t)_{t\ge 0}$ is defined as a weak solution to the following SDE in $\mathbb{R}^d$:
\begin{equation}
\label{eq:target:sde}
X_0 = x_0\quad\mbox{and}\quad \rmd X_t = \alpha(X_t)\rmd t + \rmd W_t\eqsp,
\end{equation}
where $(W_t)_{t\ge 0}$ is a standard Brownian motion. It is assumed that $\alpha$ is of the form $\alpha(x) = \nabla_x A(x)$ where $A: \mathbb{R}^d \to \mathbb{R}$ is a twice continuously differentiable function. The solution to \eqref{eq:target:sde} is supposed to be partially observed at times $t_0=0,\dots,t_n$ through an observation process $(Y_k)_{0\le k \le n}$ in $(\mathbb{R}^m)^{n+1}$. 
For all $0\le k \le n$, the distribution of $Y_k$ given $X_k:= X_{t_k}$ has a density with respect to a reference measure $\lambda$ on $\mathbb{R}^m$ given by $g(X_k,\cdot) = g_k(X_k)$. 
The distribution of $X_0$ has a density with respect to a reference measure $\mu$ on $\mathbb{R}^d$ given by $\chi$.
For all $0\le k \le n-1$, the conditional distribution of $X_{k+1} $ given $X_{k}$ has a density $\qk(X_{k},\cdot)$ with respect to $\mu$.
 
Let $0 \leq k \leq k' \leq n$, the joint smoothing distributions of the hidden states are defined, for all measurable function $h$ on $(\mathbb{R}^d)^{k'-k + 1}$, by:
\[
\phi_{k:k'|n}[h] = \mathbb{E}\left[h(X_k,\ldots,X_{k'})\middle|Y_{0:n}\right]\eqsp.
\]
For all $0\le k\le n$, $\phi_{k} = \phi_{k:k|k}$ denote the filtering distributions. The aim of this section is to detail the extension of PaRIS algorithm to approximate expectations of the form
\begin{equation}
\label{def:addfunc}
\phi_{0:n\vert n}[H_{n}] = \mathbb{E}\left[H_n(X_{0:n})\middle|Y_{0:n}\right] \text{ where } H_n=\sum_{k=0}^{n-1}h_k(X_k,X_{k+1})\eqsp,
\end{equation}
when the transition density of the hidden states is not available explicitly and where $\{h_k\}_{k=0}^{n-1}$ are given functions on $\mathbb{R}^d\times \mathbb{R}^d$. 
The algorithm is based on the following link between the filtering and smoothing distributions for additive functionals, see \cite{olsson:westerborn:2016}:
\begin{equation}
\phi_{0:n|n}[h] = \phi_n[T_n[h]]\eqsp,\;\mbox{where}\; T_n[h](X_n) = \E\left[h(X_{0:n})\vert X_n,Y_{0:n}\right]\eqsp.\label{eq:FFbsm:equality}
\end{equation}
The approximation of \eqref{eq:FFbsm:equality} requires first to approximate the sequence of filtering distributions. 
Sequential Monte Carlo methods provide an efficient and simple solution to obtain these approximations using sets of particles $\{\xi^{\ell}_k\}_{\ell=1}^N$ associated with weights $\{\omega^{\ell}_k\}_{\ell=1}^N$, $0\le k \le n$.

At time $k = 0$, $N$ particles $\{\xi^{\ell}_0\}_{\ell=1}^N$ are sampled independently according to  $\xi^{\ell}_0 \sim \eta_0$, where $\eta_0$ is a probability density with respect to $\mu$. 
Then, $\xi^{\ell}_0$ is associated with the importance weights $\omega_0^{\ell} = \chi(\xi^{\ell}_0)g_0 (\xi^{\ell}_0)/\eta_0(\xi^{\ell}_0)$. 
For any bounded and measurable function $h$ defined on $\mathbb{R}^d$, the expectation $\phi_{0}[h] $ is approximated by
\[
\phi^N_{0}[h] = \frac{1}{\Omega_0^N} \sum_{\ell=1}^N \omega_0^{\ell} h \left(\xi^{\ell}_0 \right)\eqsp, \quad \Omega_0^N:= \sum_{\ell=1}^N \omega_0^{\ell}\eqsp.
\]
Then, for $1\le k \le n$, using $\{(\xi^{\ell}_{k-1},\omega^{\ell}_{k-1})\}_{\ell=1}^N$, the auxiliary particle filter of \cite{pitt:shephard:1999} samples pairs $\{(I^{\ell}_k,\xi^{\ell}_{k})\}_{\ell=1}^N$ of indices and particles using an instrumental transition density $p_k$ on $\mathbb{R}^d\times \mathbb{R}^d$ and an adjustment multiplier function $\vartheta_k$ on $\mathbb{R}^d$. Each new particle $\xi^{\ell}_{k}$ and weight $\omega^{\ell}_k$ at time $k$ are computing following these steps:
\begin{enumerate}[-]
\item choose a particle index $I^{\ell}_k$ at time $k-1$ in $\{1,\ldots,N\}$ with probabilities proportional to $\omega_{k-1}^{j} \vartheta_k (\xi^{j}_{k-1})$, for $j$ in $\{1,\ldots,N\}$ ;
\item sample  $\xi^{\ell}_{k}$ using this chosen particle according to $\xi^{\ell}_{k} \sim p_k(\xi^{I^{\ell}_k}_{k-1},\cdot)$ ; 
\item  associate the particle $\xi^{\ell}_k$ with the importance weight:
\begin{equation}
\label{eq:importance:weights}
\omega^{\ell}_k := \frac{\qk(\xi_{k-1}^{I^{\ell}_k},\xi^{\ell}_k)g_k(\xi^{\ell}_k)}{\vartheta_k(\xi^{I^{\ell}_k}_{k-1}) p_k (\xi_{k-1}^{I^{\ell}_k},\xi^{\ell}_k)}\eqsp.
\end{equation}
\end{enumerate} 
The expectation $\phi_{k}[h]$ is approximated by
\[
\phi^N_{k}[h] := \frac{1}{\Omega_k^N} \sum_{\ell=1}^N \omega_k^{\ell} h \left(\xi^{\ell}_k \right)\eqsp,\quad\Omega_k^N:= \sum_{\ell=1}^N \omega_k^{\ell}\eqsp.
\]
PaRIS algorithm uses the same decomposition as the FFBS algorithm introduced in \cite{doucetgodsillandrieu:2000} and the FFBSi algorithm proposed by \cite{godsill:doucet:west:2004} to approximate smoothing distributions. It combines both the forward only version of the FFBS algorithm with the sampling mechanism of the FFBSi algorithm. It does not produce an approximation of the smoothing distributions but of the smoothed expectation of a fixed additive functional and thus  may be used to approximate \eqref{def:addfunc}. Its crucial property is that it does not require a backward pass, the smoothed expectation is computed on-the-fly with the particle filter and no storage of the particles or weights is needed. 

PaRIS algorithm relies on the following fundamental property of $T_k[H_k]$ when $H_k$ is as in \eqref{def:addfunc}:
\begin{align*}
T_k[H_k](X_k) &=\mathbb{E}\left[T_{k-1}[H_{k-1}](X_{k-1}) + h_{k-1}(X_{k-1},X_k)\middle|X_k,Y_{0:k-1} \right]\eqsp,\\
&= \frac{\int \phi_{k-1}(\rmd x_{k-1})q_{k-1}(x_{k-1},X_k)\left\{T_{k-1}[H_{k-1}](x_{k-1}) + h_{k-1}(x_{k-1},X_k)\right\}}{\int \phi_{k-1}(\rmd x_{k-1})q_{k-1}(x_{k-1},X_k)}\eqsp.
\end{align*}
Therefore, \cite{olsson:westerborn:2016} introduces sufficient statistics $\tau^i_k$ (starting with $\tau^i_0 = 0$, $1\le i\le N$), approximating $T_k[H_k](\xi^i_k)$, for $1\le i\le N$ and $0\le k \le n$. First, replacing $\phi_{k-1}$ by $\phi^N_{k-1}$ in the last equation leads to the following approximation of $T_k[H_k](\xi^i_k)$:
\begin{equation}
\label{eq:Tk}
T_k^N[H_k](\xi_k^i) = \sum_{j=1}^N \Lambda_{k-1}^N(i,j)\left\{T_{k-1}[H_{k-1}](\xi_{k-1}^j) + h_{k-1}(\xi^j_{k-1},\xi^i_k)\right\}\eqsp, 
\end{equation}
where
\begin{equation}
\label{eq:Lambda}
\Lambda_{k}^N(i,\ell) = \frac{\omega^{\ell}_{k} \qk(\xi^{\ell}_{k},\xi_{k+1}^{i})}{\sum_{\ell=1}^N\omega^{\ell}_{k} \qk(\xi^{\ell}_{k},\xi_{k+1}^{i})}\eqsp,\quad 1\le \ell\le N\eqsp.
\end{equation}
Computing exactly these approximations would lead to a complexity growing quadratically with $N$ because of the normalizing constant in \eqref{eq:Lambda}. Therefore, PaRIS algorithm samples particles in the set $\{\xi^j_{k-1}\}_{j=1}^N$ with probabilities $\Lambda_{k}^N(i,\cdot)$ to approximate the expectation \eqref{eq:Tk} and produce $\tau^i_k$.
Choosing $\tilde{N}\ge 1$, at each time step $0\le k \le {n-1}$ these statistics are updated according to the following steps.
\begin{enumerate}[(i)]
\item \label{it:PaRIS:filt} Run one step of a particle filter to produce $\{(\xi^{\ell}_k, \omega^{\ell}_k)\}$ for $1\le \ell \le N$.
\item \label{it:PaRIS:sampleindex} For all $1\le i \le N$, sample independently $J_{k}^{i,\ell}$ in $\{1,\ldots,N\}$ for $1\le \ell \le \widetilde N$ with probabilities $\Lambda_{k}^N(i,\cdot)$, given by \eqref{eq:Lambda}.
\item \label{it:PaRIS:smooth} Set
\[
\tau^{i}_{k+1} := \frac{1}{\widetilde{N}} \sum^{\widetilde{N}}_{\ell=1} \left\{ \tau^{J_{k}^{i,\ell}}_{k} + h_{k} \left(\xi^{J_{k}^{i,\ell}}_{k}, \xi^{i}_{k+1}\right)  \right\}\eqsp.
\]
\end{enumerate}
Then, \eqref{def:addfunc} is approximated by
\[
\phi_{0:n\vert n}^N[\tau_n] = \frac{1}{\Omega_n^N}\sum_{i=1}^N \omega^{i}_n \tau_n^i\eqsp.
\] 
As proved in \cite{olsson:westerborn:2016}, the algorithm is asymptotically consistent (as $N$ goes to infinity) for any precision parameter $\tilde N$. However, there is a significant qualitative difference between the cases $\tilde{N} = 1$ and $\tilde{N} \geq 2$. As for the FFBSi algorithm,  when there exists $\sigma_+$ such that $0<\qk <\sigma_+$, PaRIS algorithm may be implemented with $\mathcal{O}(N)$ complexity using the accept-reject mechanism of \cite{douc:garivier:moulines:olsson:2011}.

In general situations, PaRIS algorithm cannot be used for stochastic differential equations as $\qk$ is unknown. Therefore, the computation of the importance weights $\omega_{k}^{\ell}$ and of the acceptance ratio of \cite{douc:garivier:moulines:olsson:2011} is not tractable. Following \cite{fearnhead:papaspiliopoulos:roberts:2008,olsson:strojby:2011}, filtering weights can be approximated by replacing $\qk(\xi^{\ell}_{k},\xi_{k+1}^{i})$  by an unbiased  estimator $\widehat{q}_k(\xi^{\ell}_{k},\xi_{k+1}^{i};\zeta_k)$, where $\zeta_k$ is a random variable in $\mathbb{R}^q$ such that:
\[
\widehat{q}_k(\xi^{\ell}_{k},\xi_{k+1}^{i};\zeta_k)> 0~~\text{a.s}\quad\mbox{and}\quad
\mathbb{E}\left[\widehat{q}_k(\xi^{\ell}_{k},\xi_{k+1}^{i};\zeta_k)\middle| \mathcal{G}_{k+1}^N\right] = \qk(\xi^{\ell}_{k},\xi_{k+1}^{i})\eqsp,
\]
where, for all $0\le k \le n$, 
\begin{align*}
\mathcal{F}_{k}^N &= \sigma\left\{Y_{0:k};(\xi^{\ell}_{u},\omega^{\ell}_{u},\tau^{\ell}_{u});J_{v}^{\ell,j};~1\le \ell\le N,~0\le u\le k, 1\le j \le \widetilde{N}, 0\le v< k\right\}\eqsp,\\
\mathcal{G}_{k+1}^N &= \mathcal{F}_{k}^N \vee \sigma\left\{Y_{k+1};(\xi^{\ell}_{k+1},\omega^{\ell}_{k+1});~1\le \ell\le N\right\}\eqsp.
\end{align*}
Practical choices for $\zeta_k$ are discussed below, see for instance \eqref{eq:GPE1} which presents the choice made for the implementation of such estimators in our context. In the case where $\qk$ is unknown, the filtering weights in \eqref{eq:importance:weights} then become:
\begin{equation}
\label{eq:random:weight}
\widehat{\omega}^{\ell}_k := \frac{\widehat{q}_k(\xi_{k-1}^{I^{\ell}_k},\xi^{\ell}_k;\zeta_k)g_k(\xi^{\ell}_k)}{\vartheta_k(\xi^{I^{\ell}_k}_{k-1}) p_k (\xi_{k-1}^{I^{\ell}_k},\xi^{\ell}_k)}\eqsp.
\end{equation}
Therefore, to obtain a generalized random version of PaRIS algorithm, we only need to be able to sample from the discrete probability distribution $\Lambda_k^N(i,\cdot)$ in the case of SDE based HMM. Consider the following assumption: for all  $0\leq k\leq n$, there exists a random variable $\hat{\sigma}^k_+$ measurable with respect to $\mathcal{G}_{k+1}^N$ such that,
\begin{assumpt}
\label{assumpt:global}
\mathrm{sup}_{x,y,\zeta}\;\widehat{q}_k(x,y;\zeta)\leq \hat{\sigma}^k_+\eqsp.
\end{assumpt}
\begin{lemma}
\label{lem:AR:unbiased}
Assume that $A_1$ holds for some $0\le k\le n-1$.  For all $1\le i \le N$, define the random variable $J_k^i$ as follows:
\begin{algorithmic}
\REPEAT
\STATE Sample independently $\zeta$, $U\sim \mathcal{U}[0,1]$ and $J\in\{1,\ldots,N\}$ with probabilities proportional to $\{\widehat{\omega}_{k}^1,\dots,\widehat{\omega}_{k}^N\}$.
\UNTIL{$U \leq \widehat{\qk}(\xi_{k}^J,\xi_{k+1}^i,\zeta)/\hat{\sigma}^k_+$.}
\STATE Set $J_k^i = J$.
\end{algorithmic}
Then, the conditional probability distribution given $\mathcal{G}_{k+1}^N$ of $J_{k}^{i}$ is $\Lambda_{k}^N(i,\cdot)$. 
\end{lemma}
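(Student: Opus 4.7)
My approach is the standard rejection-sampling argument, adapted to the fact that the ``density'' $\widehat{q}_k(\cdot,\cdot;\zeta)$ used inside the acceptance test is itself random through $\zeta$ and only matches the true $q_k$ in expectation. I would condition throughout on $\mathcal{G}_{k+1}^N$, which freezes the weights $\widehat{\omega}_k^{1:N}$, the particles $\xi_k^{1:N}$, the target $\xi_{k+1}^i$, and, by $A_1$, the upper bound $\hat\sigma_+^k$, while each iteration of the loop draws fresh and mutually independent $(\zeta,U,J)$.

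The first step is to reduce to a single trial. Fix $1\le i\le N$. Since successive iterations of the loop are i.i.d.\ conditionally on $\mathcal{G}_{k+1}^N$, the elementary geometric argument gives
$$\mathbb{P}(J_k^i=\ell \mid \mathcal{G}_{k+1}^N) \;=\; \frac{\pi_\ell}{\sum_{j=1}^N \pi_j}, \qquad \pi_\ell \;:=\; \mathbb{P}\bigl(J=\ell,\ U\le \widehat{q}_k(\xi_k^J,\xi_{k+1}^i;\zeta)/\hat\sigma_+^k \,\bigm|\, \mathcal{G}_{k+1}^N\bigr),$$
provided the loop terminates a.s., which it does because $\widehat{q}_k>0$ a.s.\ makes $\sum_j \pi_j>0$.

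The second step is to compute $\pi_\ell$ via the three-level independence of $(J,\zeta,U)$ in a single trial. I would first use the independence of $U$ from $(J,\zeta,\mathcal{G}_{k+1}^N)$: since $U\sim \mathcal{U}[0,1]$ and by $A_1$ the ratio $\widehat{q}_k/\hat\sigma_+^k$ lies in $[0,1]$ almost surely, the inner acceptance probability is exactly $\widehat{q}_k(\xi_k^\ell,\xi_{k+1}^i;\zeta)/\hat\sigma_+^k$. Next, the independence of $J$ from $\zeta$ factors the remaining expectation:
$$\pi_\ell \;=\; \frac{\widehat{\omega}_k^\ell}{\sum_{j}\widehat{\omega}_k^j}\;\mathbb{E}\!\left[\frac{\widehat{q}_k(\xi_k^\ell,\xi_{k+1}^i;\zeta)}{\hat\sigma_+^k}\,\bigg|\,\mathcal{G}_{k+1}^N\right].$$
Now comes the one genuinely new ingredient: the unbiasedness hypothesis $\mathbb{E}[\widehat{q}_k(\xi_k^\ell,\xi_{k+1}^i;\zeta)\mid \mathcal{G}_{k+1}^N] = q_k(\xi_k^\ell,\xi_{k+1}^i)$, together with the $\mathcal{G}_{k+1}^N$-measurability of $\hat\sigma_+^k$, collapses this to $\pi_\ell = \widehat{\omega}_k^\ell\, q_k(\xi_k^\ell,\xi_{k+1}^i)/\bigl(\hat\sigma_+^k\sum_j \widehat{\omega}_k^j\bigr)$. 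Normalizing by $\sum_j\pi_j$, the factor $\hat\sigma_+^k\sum_j\widehat{\omega}_k^j$ cancels and the ratio is exactly $\Lambda_k^N(i,\ell)$.

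The only real subtlety, and the only place the paper's hypotheses do work beyond textbook accept--reject, is what $A_1$ enables: the ratio $\widehat{q}_k/\hat\sigma_+^k$ must lie in $[0,1]$ almost surely so that the acceptance probability equals that ratio exactly (no truncation), and $\hat\sigma_+^k$ must be $\mathcal{G}_{k+1}^N$-measurable so that it pulls out of the conditional expectation when unbiasedness is invoked. Once these measurability points are handled the computation is routine, and I do not anticipate any serious obstacle.
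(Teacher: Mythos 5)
Your proof is correct and follows essentially the same route as the paper's: the paper writes out the geometric series over the number of rejected trials explicitly and then invokes the conditional unbiasedness $\mathbb{E}[\widehat{q}_k(\xi_k^{J},\xi_{k+1}^i;\zeta)\mid \mathcal{G}_{k+1}^N]=q_k(\xi_k^{J},\xi_{k+1}^i)$ so that the per-trial acceptance probability is proportional to the true $q_k$, exactly the step you isolate, after which $\hat\sigma_+^k$ and the weight normalization cancel to give $\Lambda_k^N(i,\cdot)$. Your packaging of the geometric argument as the normalized single-trial acceptance probability, plus the explicit remark that positivity of $\widehat{q}_k$ guarantees a.s.\ termination, is just a tidier presentation of the same computation.
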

\begin{proof}
See Appendix \ref{sec:append:proofs}.
\end{proof}
Note that Lemma~\ref{lem:AR:unbiased} still holds if assumption \eqref{assumpt:global} is relaxed and replaced by one of the two following assumptions:
\begin{assumpt}\label{assumpt:ysupport}
\mathrm{sup}_{j,y,\zeta}\;\widehat{q}_k(\xi^j_k,y,\zeta)\leq \hat{\sigma}^k_+\;.
\end{assumpt}
\vspace{-\baselineskip}
\begin{assumpt}\label{assumpt:local}
\mathrm{sup}_{i,j,\zeta}\;\widehat{q}_k(\xi^j_k,\xi^j_{k+1},\zeta)\leq \hat{\sigma}^k_+\;.
\end{assumpt}
It is worth noting that under assumptions \eqref{assumpt:global} or \eqref{assumpt:ysupport}, the linear complexity property of PaRIS algorithm still holds, whereas if only assumption \eqref{assumpt:local} holds, the algorithm has a quadratic complexity.

\subsection*{Bounded estimator of $q_k$}
For $x, y \in \mathbb{R}^d$, by Girsanov and Ito's formulas, the transition density $q_k(x,y)$ of \eqref{eq:target:sde} satisfies, with $\Delta_k = t_{k+1}-t_k$,
\begin{align*}
q_k(x,y)=\varphi_{\Delta_k}(x,y)\exp\left\lbrace A(y)-A(x)\right\rbrace \mathbb{E}_{\mathbb{W}^{x,y,\Delta_k}}\left[ \exp \left\lbrace - \int_0^{\Delta_k} \phi(\mw_s)\rmd s \right\rbrace \right]\eqsp,
\end{align*}
where $\mathbb{W}^{x,y,\Delta_k}$ is the law of Brownian bridge starting at $x$ at 0 and hitting $y$ at $\Delta_k$, $(\mw_t)_{0\leq t \leq \Delta_k}$ is such a Brownian bridge, $\varphi_{\Delta_k}(x,y)$ is the p.d.f. of a normal distribution with mean $x$ and variance $\Delta_k$, evaluated at $y$ and $\phi:\mathbb{R}^d\to\mathbb{R}$ is defined as:
\[
\phi(x) =\left(\|\alpha(x)\|^2  + \triangle A(x)\right)/2\eqsp,
\]
with $\triangle$ the Laplace operator.
Assume that there exist random variables $\Lo_\mw$ and $\U_\mw$ such that for all $0\leq s \leq \Delta_k$, $\Lo_\mw \leq \phi(\mw_s)\leq \U_\mw$.
Let $\kappa$ be a random variable taking values in $\mathbb{N}$ with distribution $\mu$ and $(U_j)_{1\le j\le \kappa}$ be independent uniform random variables on $[0,\Delta_k]$, and $
\zeta_k = \left\{\kappa,\mw,U_1,\ldots,U_\kappa\right\}\eqsp$. 
As shown in \cite{fearnhead:papaspiliopoulos:roberts:2008}, a positive unbiased estimator is given by 
\begin{multline}
\widehat{q}_k(x,y;\zeta_k) = \varphi_{\Delta_k}(x,y) \exp \left\{A(y) - A(x)\right\}\\ 
\times\mathrm{exp}\left\{-\U_\mw\Delta\right\}\frac{\Delta_k^{\kappa}}{\mu(\kappa)\kappa!}\prod_{j=1}^{\kappa}\left(\U_\mw-\phi(\mw_{U_j})\right)\eqsp.\label{eq:unbiased:q}
\end{multline}
Interesting choices of $\mu$ are discussed in \cite{fearnhead:papaspiliopoulos:roberts:2008} and we focus here on the so called GPE-1, where $\mu$ is a Poisson distribution with intensity $(\U_\mw-\Lo_\mw)\Delta_k$. In that case, the estimator \eqref{eq:unbiased:q} becomes:
\begin{equation}
\widehat{q}_{k}(x,y;\zeta_k) = \varphi_{\Delta_k}(x,y) \exp \left\{A(y) - A(x)- \Lo_\mw\Delta_k \right\}\prod_{j=1}^{\kappa}\frac{\U_\mw-\phi(\mw_{U_j})}{\U_\mw-\Lo_\mw}\eqsp.\label{eq:GPE1}
\end{equation}
On the r.h.s. of \eqref{eq:GPE1}, the product over $\kappa$ elements is bounded by 1, therefore, a sufficient condition to satisfy of the assumptions \eqref{assumpt:global}-\eqref{assumpt:local} is that the function:
\begin{align}
\rho_{\Delta_k}:\eqsp\mathbb{R}^d\times \mathbb{R}^d &\mapsto \mathbb{R}\nonumber\\
(x,y)&\mapsto \varphi_{\Delta_k}(x,y) \exp \left\{A(y) - A(x)- \Lo_\mw\Delta_k \right\}\label{eq:rho:func}
\end{align}
is upper bounded almost surely by $\hat{\sigma}^k_+$.
In particular, if $\Lo_\mw$ is bounded almost surely, \eqref{eq:rho:func} always satisfies assumption \eqref{assumpt:local} and Algorithm~\ref{alg:Ozaki:PaRIS} can be used. This condition is always satisfied for models in the domains $\mathcal{D}_1$ and $\mathcal{D}_2$ defined in \cite{beskos:papaspiliopoulos:roberts:fearnhead:2006}, i.e.\ domains for which the exact algorithms EA1 and EA2 can be used.

When \eqref{assumpt:global} or \eqref{assumpt:ysupport} holds, it can be nonetheless of practical interest to choose the bound $\hat{\sigma}^k_+$ corresponding to \eqref{assumpt:local}. Indeed, this might increase significantly the acceptance rate of the algorithm, and therefore reduce the number of drawings of the random variable $\zeta$, which has a much higher cost than the computation of $\rho$, as it requires simulations of Brownian Bridges. 
Moreover, this latter option can also avoid numerical optimization if no analytical expression of $\hat{\sigma}_+^k$ is available. In practice, we found this option more efficient in terms of computation time when $N$ has moderate values. 

\begin{algorithm}
\caption{GRand PaRIS algorithm}
\begin{algorithmic}
\FORALL{$i \in 1,\dots, N$}
\STATE Sample $\xi_0^i \sim\eta_0$, $\tau_0^i = 0$  and  $\widehat \omega_0^i = g_0(\xi_0^i)\chi_0(\xi_0^i)/\eta_0(\xi_0^i)$.
\ENDFOR
\FOR{$k \in 0,\dots, n-1$}
\FORALL{$i \in 1,\dots, N$}
\STATE Set $\tau_{k+1}^i=0$;
\STATE Sample $I_{k+1}^{i}$ in $\{1,\ldots,N\}$ with probabilities proportional to $\{\widehat{\omega}_{k}^1\vartheta_{k+1}(\xi_{k}^1),\dots,\widehat{\omega}_{k}^N\vartheta_{k+1}(\xi_{k}^N)\}$.
\STATE Sample $\xi_{k+1}^{i} \sim p_k(\xi_{k}^{I_{k+1}^{i}},\cdot)$.
\STATE For all $1\le m\le M$, sample independently $\zeta_k^m=(\kappa_m,\mw_m, (U_j^m)_{1\leq j\leq \kappa_m})$ with $\kappa_m\sim \mu$, $\mw_m\sim \mathbb{W}_k^{X_{k+1}}$ and $(U_j^m)_{1\leq j\leq \kappa_m}\sim \mathcal{U}[0,\Delta_k]^{\otimes \kappa_m}$.
\STATE Compute $\widehat{\omega}^{i}_{k+1}$ using equation \eqref{eq:random:weight}.
\FORALL{$\ell \in 1,\dots, \widetilde N$} 
\STATE Sample $J_k^{i,\ell}$ as in Lemma~\ref{lem:AR:unbiased}.
\STATE Update $\tau_{k+1}^i = \tau_{k+1}^i + (\tau^{J_k^{i,\ell}}_{k} + h_k(\xi^{J_k^{i,\ell}}_{k},\xi^i_{k+1}))/\tilde{N}$.
\ENDFOR
\ENDFOR
\ENDFOR
\end{algorithmic}
\label{alg:Ozaki:PaRIS}
\end{algorithm}

\section{Convergence results}
\label{sec:convergence}
Consider the following assumptions.
\begin{hypA}
\label{assum:boundmodel}
\begin{enumerate}[(i)]
\item For all $k \geq 0$ and  all $x\in \mathbb{R}^d$, $g_{k}(x) >0$.
\item $\underset{k\geq 0}{\sup}|g_{k}|_{\infty} < \infty$.
\end{enumerate}
\end{hypA}

\begin{hypA}
\label{assum:boundalgo}
$\underset{k\geq 1}{\sup}|\vartheta_k|_{\infty} < \infty$, $\underset{k\geq 1}{\sup}|p_k|_{\infty} < \infty$ and $\underset{k\geq 1}{\sup}|\widehat{\omega}_{k}|_{\infty} < \infty$, where
\[
\widehat{\omega}_{0}(x) = \frac{\chi(x)g_0(x)}{\eta_0(x)} \quad\mbox{and for}\; k\ge1\quad\widehat{\omega}_{k}(x,x';z) = \frac{\widehat{\qk}(x,x';z)g_{k+1}(x')}{\vartheta_{k+1}(x) p_{k} (x,x')}\eqsp.
\]
\end{hypA}

\begin{lemma}
\label{lem:iid}
For all $0\le k \le n-1$, the random variables $\{\widehat{\omega}_{k+1}^i\tau_{k+1}^i\}_{i=1}^N$ are independent conditionally on $\mathcal{F}_k^{N}$ and
\[
\mathbb{E}\left[\widehat{\omega}^1_{k+1}\tau^{1}_{k+1}\middle| \mathcal{F}_k^{N}\right] = \left(\phi^N_{k}[\vartheta_{k+1}]\right)^{-1}\phi^N_{k}\left[\int q_{k}(\cdot,x)g_{k+1}(x)\left\{\tau_k(\cdot) + h_{k+1}(\cdot,x)\right\}\rmd x\right]\eqsp.
\]
\end{lemma}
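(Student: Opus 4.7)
\emph{Independence.} By inspection of Algorithm~\ref{alg:Ozaki:PaRIS}, for each $i$ the block $\Xi_i := \bigl(I_{k+1}^i, \xi_{k+1}^i, (\zeta_k^{i,m})_{m=1}^M, (J_k^{i,\ell})_{\ell=1}^{\widetilde N}\bigr)$ is generated from $\mathcal{F}_k^N$-measurable quantities together with fresh randomness drawn independently across $i$. Since $\widehat{\omega}_{k+1}^i \tau_{k+1}^i$ is measurable with respect to $\sigma(\Xi_i)\vee\mathcal{F}_k^N$, the conditional independence of these variables given $\mathcal{F}_k^N$ follows.

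\emph{Conditional expectation via the tower property.} The plan is to condition first on the intermediate $\sigma$-algebra $\mathcal{G}_{k+1}^N$. Since $\widehat{\omega}_{k+1}^1$ is $\mathcal{G}_{k+1}^N$-measurable and Lemma~\ref{lem:AR:unbiased} gives that the $J_k^{1,\ell}$ are i.i.d.\ with law $\Lambda_k^N(1,\cdot)$ given $\mathcal{G}_{k+1}^N$, one obtains
\begin{equation*}
\mathbb{E}\bigl[\widehat{\omega}_{k+1}^1 \tau_{k+1}^1 \bigm| \mathcal{G}_{k+1}^N\bigr] = \widehat{\omega}_{k+1}^1 \sum_{j=1}^N \Lambda_k^N(1,j)\bigl\{\tau_k^j + h_k(\xi_k^j, \xi_{k+1}^1)\bigr\}.
\end{equation*}
It then remains to integrate this expression over the three independent layers of randomness generating $\mathcal{G}_{k+1}^N$ from $\mathcal{F}_k^N$: the auxiliary variable $\zeta_k$ in the weight, whose unbiasedness turns $\widehat{q}_k(\xi_k^{I_{k+1}^1},\xi_{k+1}^1;\zeta_k)$ into $q_k(\xi_k^{I_{k+1}^1},\xi_{k+1}^1)$; the proposal $\xi_{k+1}^1\sim p_k(\xi_k^{I_{k+1}^1},\cdot)$, which cancels the $p_k$ in $\widehat{\omega}_{k+1}^1$ and converts the product into an integral against $g_{k+1}(y)\rmd y$; and the index $I_{k+1}^1$, drawn with probabilities proportional to $\widehat{\omega}_k^j\vartheta_{k+1}(\xi_k^j)$, whose normaliser produces the prefactor $\phi_k^N[\vartheta_{k+1}]^{-1}$.

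\emph{Main obstacle.} The delicate algebraic step is to handle the random denominator $\sum_r \widehat{\omega}_k^r q_k(\xi_k^r,\xi_{k+1}^1)$ appearing in $\Lambda_k^N(1,j)$, which depends on $\xi_{k+1}^1$ and cannot be pulled out of the $y$-integral. The observation that makes the proof go through is that, once $\vartheta_{k+1}$ and $p_k$ have been cancelled, summing over the index $I_{k+1}^1$ produces exactly $\sum_I \widehat{\omega}_k^I q_k(\xi_k^I,y)$ inside the integrand, which telescopes against the denominator of $\Lambda_k^N(1,j)\bigl|_{\xi_{k+1}^1=y}$. What is left is $\sum_j \widehat{\omega}_k^j \int q_k(\xi_k^j,y)g_{k+1}(y)\bigl\{\tau_k^j + h_k(\xi_k^j,y)\bigr\}\rmd y$ in the numerator and $\sum_r \widehat{\omega}_k^r\vartheta_{k+1}(\xi_k^r)$ in the denominator; dividing both by $\Omega_k^N$ exhibits the two $\phi_k^N[\cdot]$ expressions and yields the claimed identity.
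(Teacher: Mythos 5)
Your proof is correct and follows essentially the same route as the paper's: condition on $\mathcal{G}_{k+1}^N$, invoke Lemma~\ref{lem:AR:unbiased} for the law of the $J_k^{1,\ell}$, use the unbiasedness of $\widehat{q}_k$, integrate against the conditional density of $(\xi_{k+1}^1,I_{k+1}^1)$ given $\mathcal{F}_k^N$, and exploit the same telescoping cancellation of the random denominator of $\Lambda_k^N(1,\cdot)$ to exhibit the two $\phi_k^N$ terms. The only caveat, shared with the paper itself, is the slight looseness about whether $\zeta_k$ (hence $\widehat{\omega}_{k+1}^1$) is $\mathcal{G}_{k+1}^N$-measurable when you both pull the weight out and then average over $\zeta_k$; the argument is sound once the integration layers are ordered as you describe.
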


\begin{proof}
See appendix \ref{sec:append:proofs}
\end{proof}

\begin{proposition}
\label{prop:exp:deviation}
Assume that H\ref{assum:boundmodel} and H\ref{assum:boundalgo} hold and that for all $1\le k\le n$, $\mathrm{osc}(h_k)<+\infty$. For all $0\le k\le n$ and all $\widetilde{N}\ge 1$, there exist $b_k,c_k>0$ such that for all $N\ge 1$ and all $\varepsilon\in\mathbb{R}_+^\star$,
\[
\mathbb{P}\left(\left|\phi_k^N[\tau_k] - \phi_k\left[T_kh_k\right]\right|\ge \varepsilon\right)\le b_k\exp\left(-c_kN\varepsilon^2\right)\eqsp.
\]
\end{proposition}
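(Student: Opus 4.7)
The plan is induction on $k$. The base case $k=0$ is immediate: the additive functional $H_0$ is empty, so $T_0[H_0]\equiv 0$, and by construction $\tau_0^i = 0$, whence $\phi_0^N[\tau_0]=\phi_0[T_0 h_0]=0$.

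For the inductive step, suppose the claim holds at step $k$ and write
\[
\phi_{k+1}^N[\tau_{k+1}] - \phi_{k+1}[T_{k+1}h_{k+1}] \;=\; \frac{A_N - B_N\,\phi_{k+1}[T_{k+1}h_{k+1}]}{B_N},
\]
with $A_N := N^{-1}\sum_{i=1}^N\widehat{\omega}_{k+1}^i\tau_{k+1}^i$ and $B_N := N^{-1}\sum_{i=1}^N\widehat{\omega}_{k+1}^i$. I would first dispose of the denominator. By Lemma~\ref{lem:iid} (applied with $h_k\equiv 0$ and $\tau_k\equiv 0$) the summands $\widehat{\omega}_{k+1}^i$ are conditionally i.i.d.\ given $\mathcal{F}_k^N$ with conditional mean $(\phi_k^N[\vartheta_{k+1}])^{-1}\phi_k^N[\int q_k(\cdot,x)g_{k+1}(x)\rmd x]$; under H1-H2 they are bounded, so a conditional Hoeffding inequality combined with standard exponential concentration of the auxiliary particle filter $\phi_k^N$ on bounded functions yields that $B_N$ concentrates exponentially around the strictly positive limit $\beta_k := \phi_k[\int q_k(\cdot,x)g_{k+1}(x)\rmd x]/\phi_k[\vartheta_{k+1}]$, so the event $\{B_N\ge \beta_k/2\}$ has exponentially high probability.

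For the numerator I would add and subtract $\mathbb{E}[A_N\mid \mathcal{F}_k^N]$, which by Lemma~\ref{lem:iid} equals
\[
(\phi_k^N[\vartheta_{k+1}])^{-1}\phi_k^N\!\left[\int q_k(\cdot,x)g_{k+1}(x)\{\tau_k(\cdot)+h_k(\cdot,x)\}\rmd x\right].
\]
The fluctuations $A_N-\mathbb{E}[A_N\mid \mathcal{F}_k^N]$ are handled by a conditional Hoeffding inequality: the summands are conditionally independent by Lemma~\ref{lem:iid} and, after centering each $h_j$ by its infimum, the convex-combination structure of the update rule keeps $|\tau_k^i|$ uniformly bounded by $\sum_{j=0}^{k-1}\mathrm{osc}(h_j)$, which together with the boundedness of $\widehat{\omega}_{k+1}^i$ gives a uniform bound on the summands. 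The deterministic target $\phi_{k+1}[T_{k+1}h_{k+1}]$ is recovered via the identity, obtained from the definition of $T_{k+1}$ and the filtering recursion,
\[
\phi_{k+1}[T_{k+1}h_{k+1}] \;=\; \frac{\phi_k\!\left[\int q_k(\cdot,x)g_{k+1}(x)\{T_k h_k(\cdot)+h_k(\cdot,x)\}\rmd x\right]}{\phi_k\!\left[\int q_k(\cdot,x)g_{k+1}(x)\rmd x\right]}.
\]

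The main obstacle is that closing the recursion requires more than the bare marginal concentration of $\phi_k^N[\tau_k]$: to compare $\mathbb{E}[A_N\mid \mathcal{F}_k^N]$ with the above numerator one must control, for the bounded test function $f(x_k):=\int q_k(x_k,x)g_{k+1}(x)\rmd x$, the weighted particle sum $\phi_k^N[f\tau_k]$ around $\phi_k[f\,T_k h_k]$. The natural remedy is to strengthen the inductive statement to a uniform exponential inequality for $\phi_k^N[f\tau_k]$ over bounded $f$ with $|f|_\infty\le 1$; with this stronger hypothesis in force, a further Hoeffding bound on the weighted sum at level $k$ and a ratio decomposition of the type $A_N/B_N - a/b = (A_N-a)/B_N - a(B_N-b)/(b\,B_N)$ applied on the event $\{B_N\ge \beta_k/2\}$ propagate the bound to level $k+1$, with constants $b_{k+1},c_{k+1}$ depending on $k$, $\widetilde{N}$, H1-H2, and $\sum_{j\le k}\mathrm{osc}(h_j)$. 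The unbiasedness of $\widehat{q}_k$ and Lemma~\ref{lem:AR:unbiased} guarantee that the true $q_k$ (and not its random estimator) appears in all conditional expectations, so the extra randomness introduced by the auxiliary variables $\zeta_k$ integrates out cleanly and does not degrade the exponential rate.
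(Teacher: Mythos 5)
Your proposal is correct and follows essentially the same route as the paper: induction on $k$, the ratio decomposition with $A_N=N^{-1}\sum_i\widehat{\omega}_{k+1}^i\tau_{k+1}^i$ and $B_N=N^{-1}\sum_i\widehat{\omega}_{k+1}^i$, conditional Hoeffding bounds based on the conditional independence and conditional expectations of Lemma~\ref{lem:iid}, the identity expressing $\phi_{k+1}[T_{k+1}h_{k+1}]$ through $\phi_k$ (the paper's appeal to Lemma~11 of \cite{olsson:westerborn:2016}, i.e.\ $\phi_k[\Upsilon_k]=0$), and a standard ratio argument to conclude. Your strengthening of the induction hypothesis to weighted sums $\phi_k^N[f\tau_k]$ with $|f|_\infty\le 1$ (and the centering of the $h_j$ by their infima to bound $|\tau_k^i|$ by $\sum_{j<k}\mathrm{osc}(h_j)$) simply makes explicit what the paper uses implicitly when it applies its induction assumption to $\phi_k^N[\Upsilon_k]$, so it is a sound refinement rather than a different proof.
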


\begin{proof}
See appendix \ref{sec:append:proofs}
\end{proof}

\section{Numerical experiments}
\label{sec:exp}
This section investigates the performance of the proposed algorithm with the sine and log-growth models. In both cases, the proposal distribution $p_k$ is chosen as the following approximation of the optimal filter (or the fully adapted particle filter in the terminology of \cite{pitt:shephard:1999}): 
$$p_k(x_{k-1},x_k)\propto \tilde{q}_k(x_{k-1}, x_k)g_k(x_{k})\eqsp,$$
where $\tilde{q}_k(x_{k-1},x_k)$ is the p.d.f. of Gaussian distibution with mean $\alpha(x_{k-1})\Delta_k$ and variance $\Delta_kI_d$, i.e.\ the Euler approximation of equation \eqref{eq:target:sde}. As the observation model is linear and Gaussian, the proposal distribution is therefore Gaussian with explicit mean and variance.

In order to evaluate the performance of the proposed algorithm, the following strategy has been chosen. We compare the estimation of the EM intermediate quantity with the one obtained by the fixed lag method of \cite{olsson:strojby:2011}, for different values of the lag (namely, 1,2,5,10,50). The particle approximation of $\mathcal{Q}(\theta,\theta)$ for each model  is computed using each algorithm, see Figure~\ref{fig:res:SINE} for the SINE model (and respectively Figure~\ref{fig:res:LG} for the log-growth model). This estimation is performed 200 times to obtain the estimates $\hQ_1,\dots,\hQ_{200}$, using $\tilde{N}=2$ particles for PaRIS algorithm, and $M=30$ replications for the Monte Carlo approximation $\widehat q_k$ of each $q_k$.  Moreover, the E step requires the computation of a quantity such as \eqref{def:addfunc} with $h_k= \log g_k + \log q_k$.  $\log q_k$ is not available explicitly and is approximated using the unbiased estimator proposed in \cite[Appendix B]{olsson:strojby:2011} based on 30 independent Monte Carlo simulations.
The intermediate quantity of the EM algorithm is also estimated                                                                                                                                                                                                                                                                                                                                                                                                                                                                                                                                                      with our algorithm 30 times using $N=5000$ particles, the reference value is then computed as the arithmetic mean of these 30 estimations, and denoted by $\hQ_\star$. Figure \ref{fig:res:SINE} (resp. \ref{fig:res:LG} ) shows this estimate for an example on one simulated data set. The GRand Paris algorithm is performed using $N=400$ particles in both cases, the fixed lag technique using $N=1600$ so that both estimations require similar computational times.

\subsection*{The SINE model} 
The performance of the GRand PaRIS algorithm are first highlighted using the SINE model, where $(X_t)_{t\geq 0}$ is supposed to be the solution to: 
\begin{equation}
\rmd X_t = \sin \left(X_t-\theta\right)\rmd t + \rmd W_t,~~X_0=x_0\eqsp. \label{eq:Lamp:SINE}
\end{equation}
This simple model has no explicit transition density, however GPE estimators may be computed by simulating Brownian bridges.
The process solution to \eqref{eq:Lamp:SINE} is observed regularly at times $t_0=0,\ldots,t_{100}=50$ through the observation process $(Y_k)_{0\leq k \leq 100}$:
\begin{equation}
Y_k = X_k + \varepsilon_k\label{eq:obs:SINE}\eqsp,
\end{equation}
where the $(\varepsilon)_{0\leq k \leq 100}$ are i.i.d. $\mathcal{N}(0,1)$.
In the example displayed on Figure \ref{fig:res:SINE}, we set $\theta=0$.
In that case, the function $\rho_{\Delta_k}$ defined in \eqref{eq:rho:func} can be upper bounded either on $(x,y)$ or only on $y$, the GRand PaRIS algorithm has therefore a linear complexity.

This same experiment was reproduced on 100 different simulated data sets.  For each simulation $s$, the empirical absolute relative bias $\mathsf{arb}_s$ and the empirical absolute coefficient of variation $\mathsf{acv}_s$ are computed as
\begin{align}
\mathsf{arb}_s &= \frac{\vert m(\hQ^s)-\hQ^s_\star\vert }{\vert \hQ^s_\star\vert }\label{eq:emp:bias}\\
\mathsf{acv}_s&=\frac{\sigma(\hQ^s)}{\vert m(\hQ^s)\vert }\label{eq:emp:cv}
\end{align}
where $m(\hQ^s)$ and $\sigma(\hQ^s)$ are the empirical mean and standard deviation of the sample $Q_1^s,\dots,Q_{200}^s$. For each estimation method, the resulting distributions of $\mathsf{arb}_1,\dots,\mathsf{arb}_{100}$ and $\mathsf{acv}_1,\dots,\mathsf{acv}_{100}$  are shown on Figure \ref{fig:mult:SINE}.\\

The GRand PaRIS algorithm outperforms the fixed lag methods for any value of the lag as the bias is the lowest (it is already negligible for $N=400$) and with a lower variance than fixed lag estimates with negligible bias (i.e., in this case, lags larger than 10). Small lags lead to strongly biased estimates for the fixed lag method, and unbiased estimates are at the cost of a large variance. It is worth noting here that the lag for which the bias is small is model dependent. 

\begin{figure}[p]
\centering
\begin{subfigure}{0.49\textwidth}
\includegraphics[width=\textwidth]{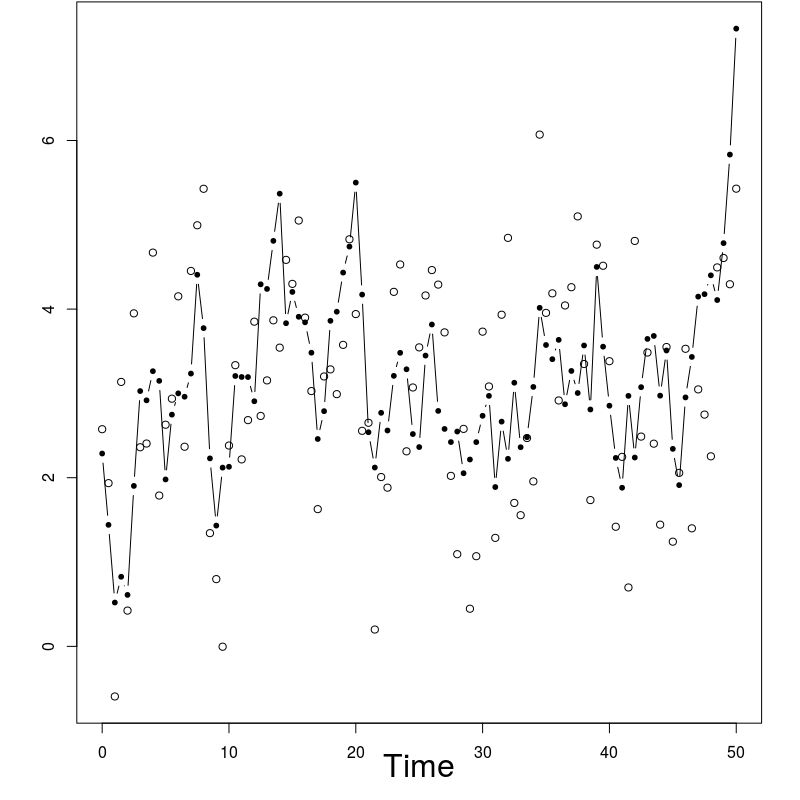}
\end{subfigure}
\begin{subfigure}{0.49\textwidth}
\includegraphics[width=\textwidth]{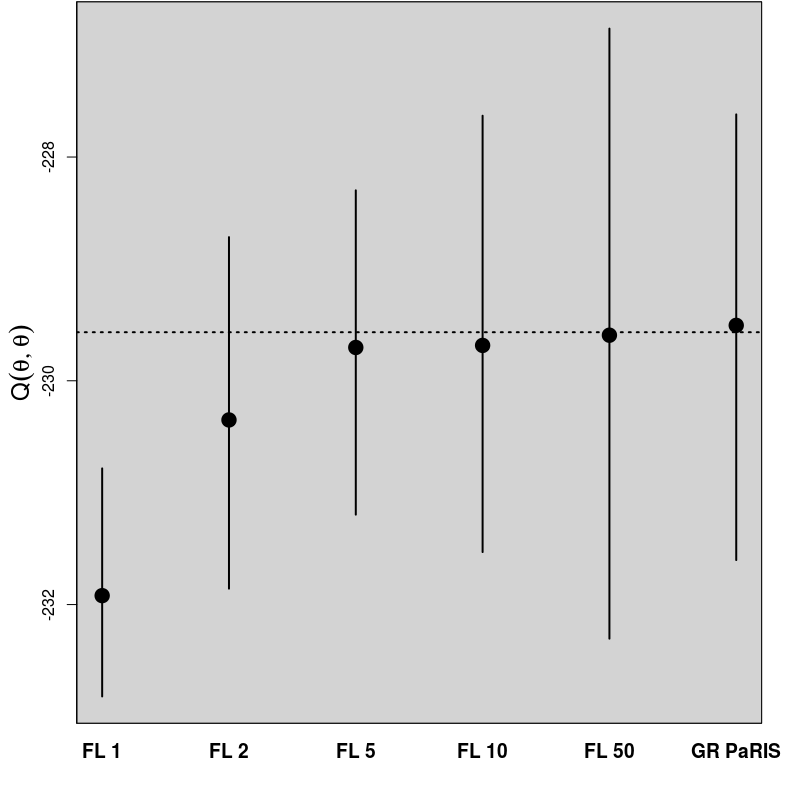}
\end{subfigure}
\caption{{\em SINE model}. Process $X$ solution to the SDE (balls) and observations $Y$ (circles) at times $t_0=0,\dots,t_{100}=50$ [left]. Estimation of the EM intermediate quantity $\mathcal{Q}(\theta,\theta)$  using the fixed lag (FL) technique for 5 different lags, and the GRand PaRIS algorithm using 200 replicates [right]. The whiskers represent the extent of the 95\% central values. The dot represents the empirical mean over the 200 replicates. The dotted line shows the reference value, computed using the GRand PaRIS algorithm with $N=5000$ particles.}
\label{fig:res:SINE}
\end{figure}
\begin{figure}[p]
\centering
\begin{subfigure}{0.49\textwidth}
\includegraphics[width=\textwidth]{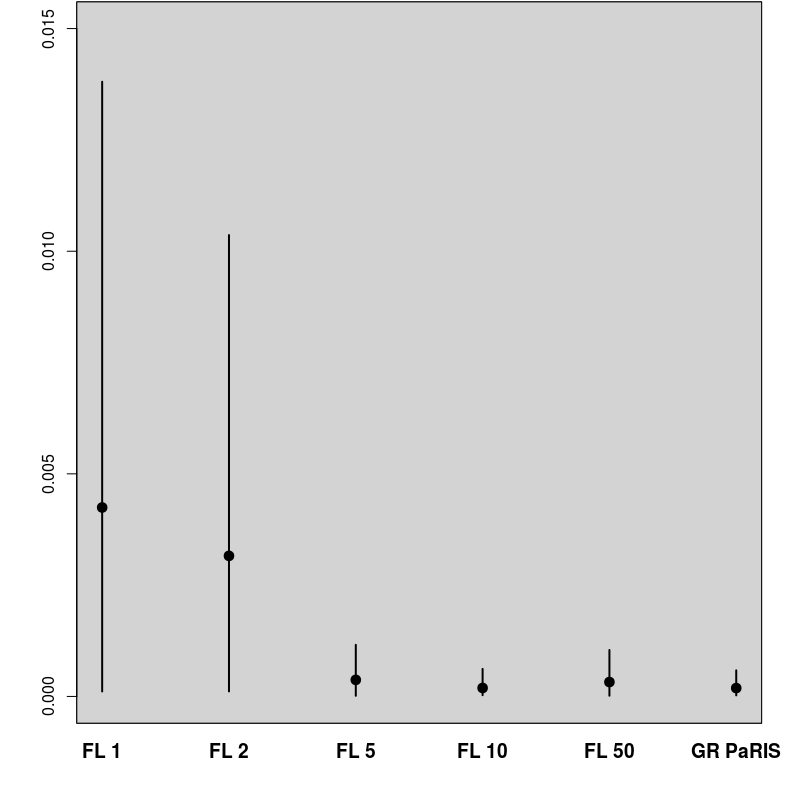}
\end{subfigure}
\begin{subfigure}{0.49\textwidth}
\includegraphics[width=\textwidth]{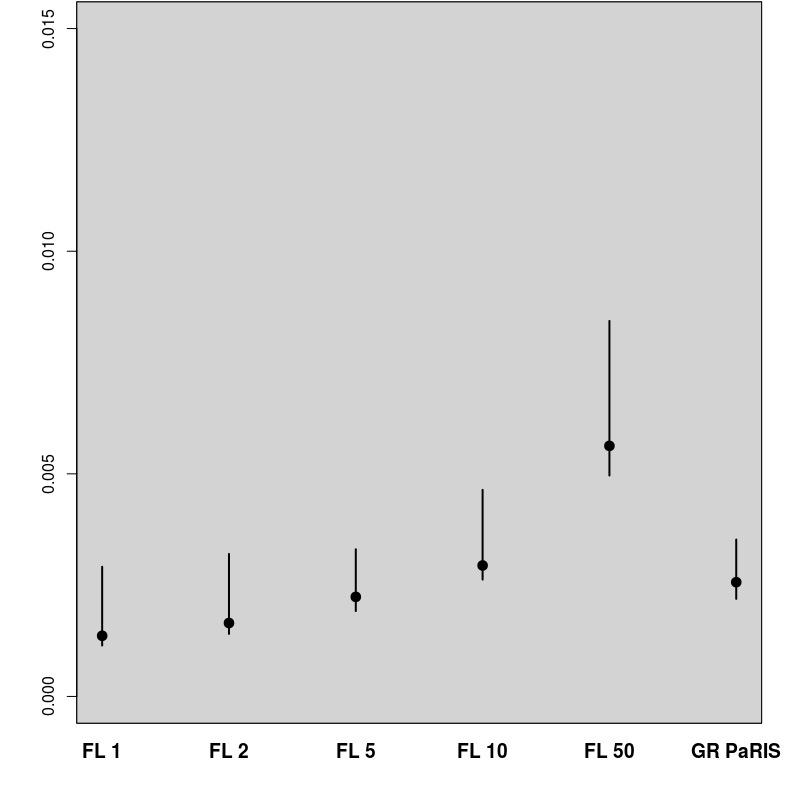}
\end{subfigure}
\caption{{\em SINE model}. Distribution of the empirical absolute relative bias [left] and of the empirical absolute coefficient of variation [right] for each method.}
\label{fig:mult:SINE}
\end{figure}

\subsection*{Log-growth model}
Following \cite{beskos:papaspiliopoulos:roberts:fearnhead:2006} and \cite{olsson:westerborn:2016}, the performance of the proposed algorithm are also illustrated with the log-growth model defined by:
\begin{equation}
\rmd Z_t = \kappa Z_t\left(1-\frac{Z_t}{\gamma}\right)\rmd t + \sigma Z_t \rmd W_t,~~Z_0=z_0. \label{eq:LG:SDE}
\end{equation}
In order to use the exact algorithms of \cite{beskos:papaspiliopoulos:roberts:fearnhead:2006} and the GPE of \cite{fearnhead:papaspiliopoulos:roberts:2008}, we consider  \eqref{eq:LG:SDE} after the Lamperti transform, i.e., the process defined by $X_t=\eta(Z_t)$, with $\eta(z) := -\log (z)/\sigma$,  which satisfies the following SDE:
\begin{equation}
\rmd X_t = \overbrace{\left( \frac{\sigma}{2} -  \frac{\kappa}{\sigma} + \frac{\kappa}{\gamma\sigma}\exp\left(-\sigma X_t\right)\right)}^{:=\alpha(X_t)}\rmd t +\rmd W_t,~~X_0=x_0=\eta(z_0).\label{eq:Lamp:LG}
\end{equation}
In this case, the conditions of the Exact Algorithm~2 defined in \cite{beskos:papaspiliopoulos:roberts:fearnhead:2006} are satisfied, as for any $m \in \mathbb{R}$ there exists $\mathsf{U}_m$ such that for all $x\ge m$, $\psi(x):=\alpha^2(x)+\alpha'(x) \leq \mathsf{U}_m$. Moreover, $\psi$ is lower bounded uniformly by  $\mathsf{L}$.  Then, GPE estimators may be computed by simulating the minimum of a Brownian bridge, and simulating Bessel bridges conditionally to this minimum, as proposed by \cite{beskos:papaspiliopoulos:roberts:fearnhead:2006}.

The process solution to \eqref{eq:Lamp:LG} is observed regularly at times $t_0=0,\dots,t_{50}=100$ through the observation process $(Y_k)_{0\le k\le 50}$ defined as:
\begin{equation}
Y_k = X_k + \varepsilon_k\label{eq:obs:LG}\eqsp,
\end{equation}
where the $(\varepsilon_k)_{0\le k \le 50}$ are i.i.d. $ \mathcal{N}(0,\sigma^2_{obs})$.
The parameters are given by $$\theta =(\kappa=0.1,\sigma=0.1,\gamma=1000,\sigma^2_{obs}=4)\eqsp.$$
In that case, the $\rho_{\Delta_k}$ function defined in \eqref{eq:rho:func} can be upper bounded as a function of $y$ when $x\in \{\xi_k^1,\dots, \xi_k^N\}$, the GRand PaRIS algorithm has therefore a linear complexity. The intermediate quantity of the EM algorithm is evaluated  as for the SINE model, see Figures~\ref{fig:res:LG} and~\ref{fig:mult:LG}.

The results for the fixed lag technique are similar to the ones presented in \cite[Figure 1]{olsson:strojby:2011} on the same model. For small lags, the variance of the estimates is small, but the estimation is highly biased. The bias rapidly decreases as the lag increases, together with a  great increase of variance.  Again, the GRand PaRIS algorithm outperforms the fixed lag smoother as it shows  a similar (vanishing) bias as the fixed lag for the largest lag and a smaller variance than the fixed lags estimates with negligible bias. 

\begin{figure}[p]
\centering
\begin{subfigure}{0.49\textwidth}
\includegraphics[width=\textwidth]{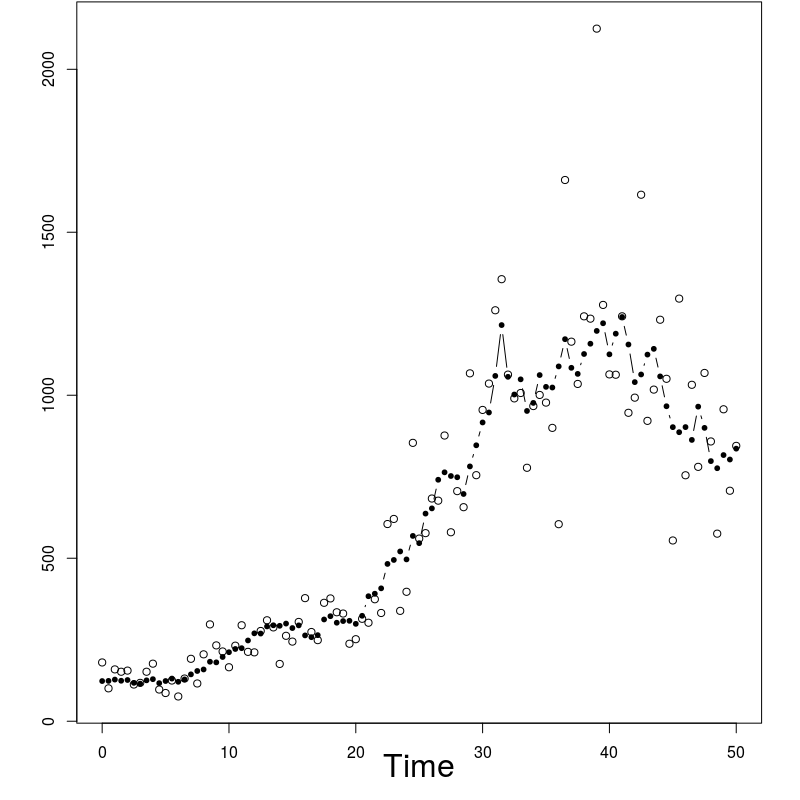}
\end{subfigure}
\begin{subfigure}{0.49\textwidth}
\includegraphics[width=\textwidth]{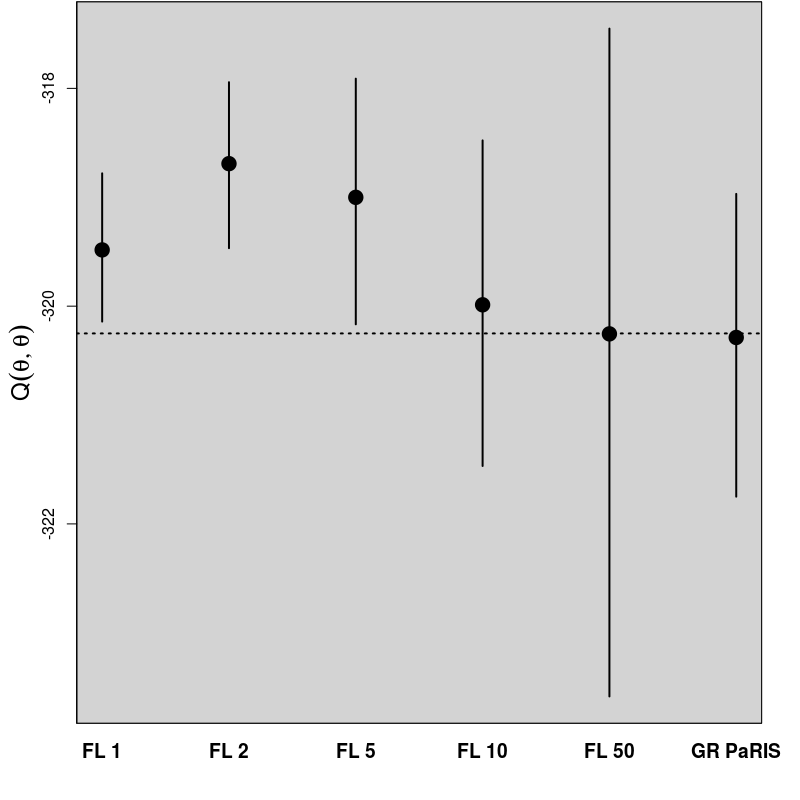}
\end{subfigure}
\caption{{\em Log-growth model}. Process $X$ solution to the SDE (balls) and observations $Y$ (circles) at times $t_0=0,\dots,t_{100}=50$ [left]. Estimation of the EM intermediate quantity $\mathcal{Q}(\theta,\theta)$  using the fixed lag (FL) technique for 5 different lags, and the GRand PaRIS algorithm using 200 replicates [right]. The whiskers represent the extent of the 95\% central values. The dot represents the empirical mean over the 200 replicates. The dotted line shows the reference value, computed using the GRand PaRIS algorithm with $N=5000$ particles.}
\label{fig:res:LG}
\end{figure}
\begin{figure}[p]
\centering
\begin{subfigure}{0.49\textwidth}
\includegraphics[width=\textwidth]{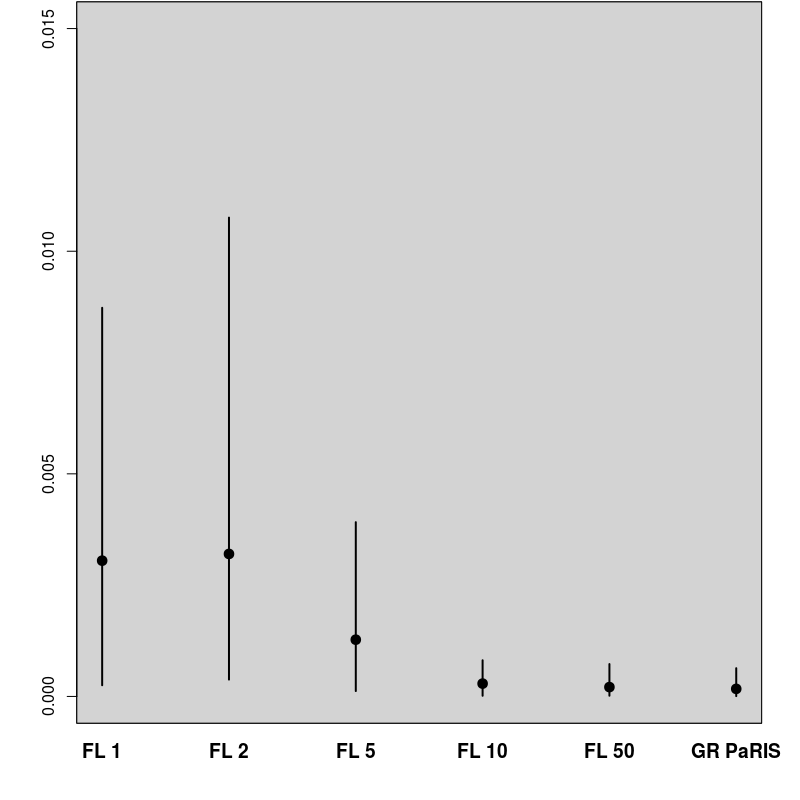}
\end{subfigure}
\begin{subfigure}{0.49\textwidth}
\includegraphics[width=\textwidth]{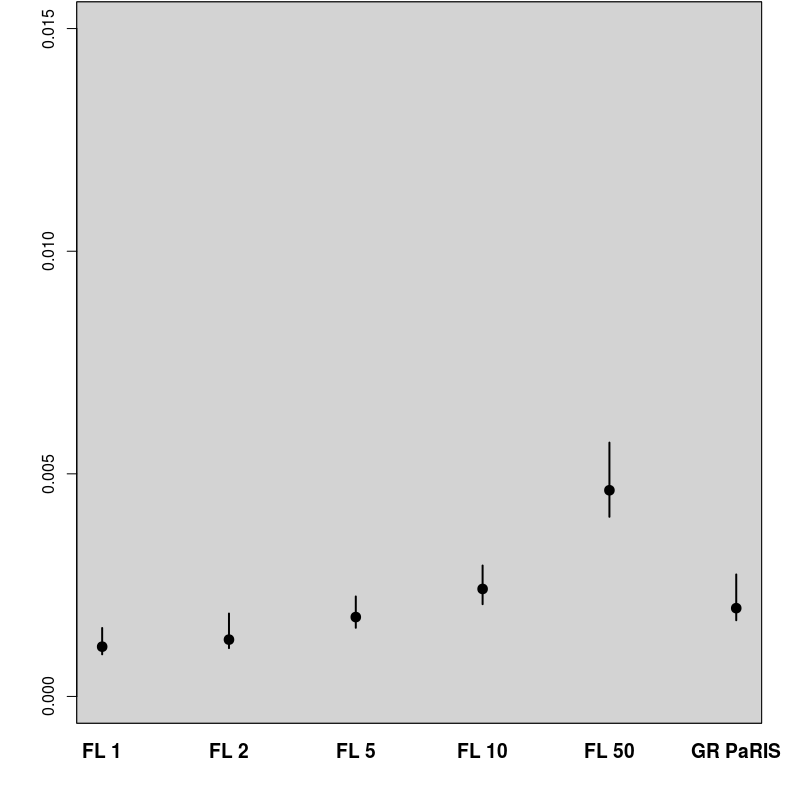}
\end{subfigure}
\caption{{\em Log-growth model}. Distribution of the empirical absolute relative bias [left] and of the empirical absolute coefficient of variation [right] for each method.}
\label{fig:mult:LG}
\end{figure}
\section{Conclusions}
This paper presents a new online SMC smoother for partially observed differential equations. This algorithm relies on an accept-reject procedure inspired from the recent PaRIS algorithm. The main result of the article for practical applications is that the mechanism of this procedure remains valid when the transition density is approximated by a an unbiased positive estimator. The proposed procedure outperforms the existing fixed lag smoother for SDE of \cite{olsson:strojby:2011}, as it does not introduce an intrinsic and non vanishing bias. In addition, numerical simulations highlight a better variance using data from two different models. It can be implemented for the class of models $\mathcal{D}_1$ and $\mathcal{D}_2$ defined in \cite{beskos:papaspiliopoulos:roberts:fearnhead:2006} with a linear complexity in $N$. 

\appendix

\section{Proofs}
\label{sec:append:proofs}
\begin{proof}[Proof of Lemma~\ref{lem:AR:unbiased}]
Let $\tau$ be the first time  draws are accepted in the accept-reject mechanism. For all $\ell\ge 1$, write
\[
\mathcal{A}^k_{\ell} = \left\{U_\ell<\widehat{q}_{k}(\xi_{k}^{J_\ell},\xi_{k+1}^{i},\zeta^{\ell}_{k})/\hat{\sigma}^k_+\right\}\eqsp.
\]
Let $h$ be a function defined on $\{1,\ldots,N\}$,
\begin{align*}
\mathbb{E}\left[h(J^{i,j}_k)\middle| \mathcal{G}_{k+1}^N\right] & = \sum_{m\ge 1}\mathbb{E}\left[h(J_m)\1_{\tau=m}\middle| \mathcal{G}_{k+1}^N\right]\eqsp,\\
& = \sum_{m\ge 1}h(m)\left(\prod_{\ell=1}^{m-1}\mathbb{E}\left[\1_{(\mathcal{A}^k_{\ell})^c}\middle| \mathcal{G}_{k+1}^N\right]\right)\mathbb{E}\left[h(J_m)\1_{\mathcal{A}^k_{m}}\middle| \mathcal{G}_{k+1}^N\right]\eqsp,\\
& = \sum_{m\ge 1}\left(\prod_{\ell=1}^{m-1}\mathbb{E}\left[1-\frac{\widehat{q_k}(\xi_{k}^{J_\ell},\xi_{k+1}^{i};\zeta_k^{\ell})}{\hat{\sigma}^k_{+}}\middle| \mathcal{G}_{k+1}^N\right]\right)\\
&\hspace{5cm}\times\mathbb{E}\left[h(J_m)\frac{\widehat{q_k}(\xi_{k}^{J_m},\xi_{k+1}^{i};\zeta_k^{m})}{\hat{\sigma}^k_{+}}\middle| \mathcal{G}_{k+1}^N\right]\eqsp,\\
& = \sum_{m\ge 1}\left(\mathbb{E}\left[1-\frac{q_k(\xi_{k}^{J_1},\xi_{k+1}^{1})}{\hat{\sigma}^k_{+}}\middle| \mathcal{G}_{k+1}^N\right]\right)^{m-1}\mathbb{E}\left[h(J_1)\frac{q_k(\xi_{k}^{J_1},\xi_{k+1}^{1})}{\hat{\sigma}^k_{+}}\middle| \mathcal{G}_{k+1}^N\right]\eqsp,\\
& = \mathbb{E}\left[h(J_1)q_k(\xi_{k}^{J_1},\xi_{k+1}^{i})\middle| \mathcal{G}_{k+1}^N\right]/\mathbb{E}\left[q_k(\xi_{k}^{J_1},\xi_{k+1}^{i})\middle| \mathcal{G}_{k+1}^N\right]\eqsp,\\
& = \sum_{\ell=1}^N \frac{h(\ell)\omega_{k-1}^{\ell}q_k(\xi_{k}^{\ell},\xi_{k+1}^{i})}{\sum_{m=1}^N\omega_{k-1}^{m}q_k(\xi_{k}^{m},\xi_{k+1}^{i})}\eqsp,\\
&= \sum_{\ell=1}^N \Lambda_{k-1}^N(i,\ell)h(\ell) \eqsp,
\end{align*}
which concludes the proof.
\end{proof}

\begin{proof}[Proof of Lemma \ref{lem:iid}]
The independence is ensured by the mechanism of SMC methods. By \eqref{eq:random:weight},
\[
\mathbb{E}\left[\widehat{\omega}^i_{k+1}\tau^{i}_{k+1}\middle| \mathcal{F}_k^{N}\right] = \mathbb{E}\left[\frac{ \widehat{\qk}(\xi_{k}^{I^{i}_{k+1}}, \xi^{i}_{k+1};\zeta_{k})g_{k+1}(\xi^{i}_{k+1})}{\vartheta_{k+1}(\xi^{I^{i}_{k+1}}_{k}) p_{k+1}(\xi_{k}^{I^{i}_{k+1}},\xi^{i}_{k+1})}\tau^{i}_{k+1}\middle| \mathcal{F}_k^{N}\right]\eqsp.
\]
Note that by Lemma~\ref{lem:AR:unbiased},
\begin{align*}
&\mathbb{E}\left[\tau^{i}_{k+1}\middle|\mathcal{G}_{k+1}^{N}\right]
 = \sum_{\ell=1}^N\frac{\omega_k^{\ell} \qk(\xi_{k}^{\ell}, \xi^{i}_{k+1}) \left(\tau^{\ell}_k + h_{k}(\xi_{k}^{\ell},\xi^{i}_{k+1})\right)}{\sum_{\ell'=1}^N\omega_k^{\ell'} \qk(\xi_{k}^{\ell'},\xi^{i}_{k+1})}\eqsp,\\
&\mathbb{E} \left[\widehat{\qk}(\xi_{k}^{I^{i}_{k+1}},\xi^{i}_{k+1};\zeta_{k}) \middle| \mathcal{G}_{k+1}^{N}\right]
 = \qk(\xi_{k}^{I^{i}_{k+1}},\xi^{i}_{k+1})\eqsp.
\end{align*}
Since $\tau^{i}_{k+1}$ and $\zeta_{k}$ are independent conditionally to $\mathcal{G}_{k+1}^{N}$:
\begin{multline*}
\mathbb{E}\left[\tau^{i}_{k+1} \widehat{\qk} (\xi_{k}^{I^{i}_{k+1}},\xi^{i}_{k+1};\zeta_{k})\middle|\mathcal{G}_{k+1}^{N}\right]\\
 = q_k(\xi_{k}^{I^{i}_{k+1}},\xi^{i}_{k+1})\sum_{\ell=1}^N\frac{\omega_k^{\ell} \qk (\xi_{k}^{\ell},\xi^{i}_{k+1})\left(\tau^{\ell}_k + h_{k}(\xi_{k}^{\ell},\xi^{i}_{k+1})\right)}{\sum_{\ell'=1}^N\omega_k^{\ell'} \qk (\xi_{k}^{\ell'},\xi^{i}_{k+1})}\eqsp.
\end{multline*}
Moreover, conditionally to $\mathcal{F}_k^N$, the probability density function of $(\xi_{k+1}^i,I_{k+1}^i)$ is given by
\[
(x,j) \mapsto \frac{\omega_k^j\vartheta_{k+1}(\xi_k^j)p_k(\xi_k^j,x)}{\Omega_k\phi_k^N[\vartheta_{k+1}]}\eqsp.
\]
Therefore, this yields:
\begin{align*}
\mathbb{E}\left[\widehat{\omega}^i_{k+1}\tau^{i}_{k+1}\middle| \mathcal{F}_k^{N}\right]&= \left(\phi^N_{k}[\vartheta_{k+1}]\right)^{-1} \sum_{j=1}^N\frac{\omega_k^j}{\Omega_k} \int \vartheta_{k+1}(\xi^{j}_{k})\frac{\qk(\xi_{k}^{j},x) g_{k+1}(x)}{\vartheta_{k+1}(\xi^{j}_{k}) p_{k}(\xi_{k}^{j},x)}\\
&\hspace{1cm}\times \sum_{\ell=1}^N\frac{\omega_k^{\ell} \qk (\xi_{k}^{\ell},x)\left(\tau^{\ell}_k + h_{k}(\xi_{k}^{\ell},x)\right)}{\sum_{\ell'=1}^N\omega_k^{\ell'}\qk(\xi_{k}^{\ell'},x)}p_{k}(\xi_{k}^{j},x)\rmd x\eqsp,\\
&= \left(\phi^N_{k}[\vartheta_{k+1}]\right)^{-1}\\
&~~~~\times\sum_{\ell=1}^N \frac{\omega_k^\ell}{\Omega_k}\left[\int \frac{ \sum_{j=1}^N \omega_k^j\qk(\xi_k^j,x) }{ \sum_{\ell'=1}^N\omega_k^{\ell'}\qk(\xi_{k}^{\ell'},x) } g_{k+1}(x)\qk (\xi_{k}^{\ell},x)\left(\tau^{\ell}_k + h_{k}(\xi_{k}^{\ell},x)\right) \rmd x \right]\\ 
& =\left(\phi^N_{k}[\vartheta_{k+1}]\right)^{-1}\phi^N_{k}\left[\int \qk(\cdot,x)g_{k+1}(x)\left\{\tau_k(\cdot) + h_{k}(\cdot,x)\right\}\rmd x\right]\eqsp,
\end{align*}
which concludes the proof.
\end{proof}

\begin{proof}[Proof of Proposition~\ref{prop:exp:deviation}]
The results is proved by induction. At time $k=0$, the result holds using that for all $1\le i \le N$, $\rho_0^i = 0$ and the convention $T_0[h_0] =0$. In addition, $\phi_0^N$ is a standard importance sampler estimator of $\phi_0$ with $\widehat \omega_0^i\le |\widehat{\omega}_0|_{\infty}$ so that for any bounded function $h$ on $\mathsf{X}$,
\[
\mathbb{P}\left(\left|\phi_0^N[h] - \phi_0\left[h\right]\right|\ge \varepsilon\right)\le b_0\exp\left(-c_0N\varepsilon^2\right)\eqsp.
\]
Assume the results holds for $k\ge 1$ and that $\vartheta_{k+1} = 1$ for simplicity. Write
\[
\phi_{k+1}^N[\tau_{k+1}] - \phi_{k+1}\left[T_{k+1}[h_{k+1}]\right] = a_N/b_N\eqsp,
\]
where $a_N = N^{-1}\sum_{i=1}^N \widehat{\omega}_{k+1}^i \left(\tau_{k+1}^i - \phi_{k+1}\left[T_{k+1}[h_{k+1}]\right]\right)$ and $b_N =N^{-1}\sum_{i=1}^N \widehat{\omega}_{k+1}^i$. By Lemma~\ref{lem:iid}, the random variables $\{\widehat{\omega}_{k+1}^i\tau_{k+1}^i\}_{i=1}^N$ are independent conditionally on $\mathcal{F}_k^{N}$ and by H\ref{assum:boundalgo},
\[
\left|\widehat{\omega}_{k+1}^i \left(\tau_{k+1}^i - \phi_{k+1}\left[T_{k+1}[h_{k+1}]\right]\right)\right| \le 2|\widehat{\omega}_{k+1}|_{\infty}|H_{k+1}|_{\infty}\eqsp.
\]
Therefore, by Hoeffding inequality,
\[
\mathbb{P}\left(\left|a_N - \mathbb{E}\left[a_N\middle|\mathcal{F}_k^{N}\right]\right|\ge \varepsilon\right) = \mathbb{E}\left[\mathbb{P}\left(\left|a_N - \mathbb{E}\left[a_N\middle|\mathcal{F}_k^{N}\right]\right|\ge \varepsilon\middle|\mathcal{F}_k^{N}\right)\right]\le 2\exp\left(-c_kN\varepsilon^2\right)\eqsp.
\] 
On the other hand,
\[
\mathbb{E}\left[a_N\middle|\mathcal{F}_k^{N}\right] = \phi^N_{k}\left[\Upsilon_k\right] \eqsp,
\]
where
\[
\Upsilon_k(x_k) = \int q_{k}(\cdot,x)g_{k+1}(x)\left(\tau_k(x_k) + h_{k+1}(x_k,x) - \phi_{k+1}\left[T_{k+1}[h_{k+1}]\right]\right)\rmd x\eqsp.
\]
By \cite[Lemma~11]{olsson:westerborn:2016}, $\phi_{k}\left[\Upsilon_k\right] = 0$ which implies by the induction assumption that 
\[
\mathbb{P}\left(\left|\mathbb{E}\left[a_N\middle|\mathcal{F}_k^{N}\right]\right|\ge \varepsilon\right)\le b_k\exp\left(-c_kN\varepsilon^2\right)\eqsp.
\]
Then,
\[
\mathbb{P}\left(\left|a_N\right|\ge \varepsilon\right) \le b_k\exp\left(-c_kN\varepsilon^2\right)\eqsp.
\] 
Similarly, as $b_N \le |\widehat{\omega}_k|_{\infty}$, by Hoeffding inequality,
\begin{multline*}
\mathbb{P}\left(\left|b_N - \mathbb{E}\left[b_N\middle|\mathcal{F}_k^{N}\right]\right|\ge \varepsilon\right) \\
= \mathbb{E}\left[\mathbb{P}\left(\left|b_N - \mathbb{E}\left[b_N\middle|\mathcal{F}_k^{N}\right]\right|\ge \varepsilon\middle|\mathcal{F}_k^{N}\right)\right]\le 2\exp\left(-c_kN\varepsilon^2\right)\eqsp.
\end{multline*}
Note that
\[
\mathbb{E}\left[b_N\middle|\mathcal{F}_k^{N}\right] = \phi^N_{k}\left[\int q_{k}(\cdot,x)g_{k+1}(x)\rmd x\right]\eqsp.
\]
By  the induction assumption,
\[
\mathbb{P}\left(\left|\mathbb{E}\left[b_N\middle|\mathcal{F}_k^{N}\right]-\phi_k\left[\int q_{k}(\cdot,x)g_{k+1}(x)\rmd x\right]\right|\ge \varepsilon\right)\le b_k\exp\left(-c_kN\varepsilon^2\right)\eqsp.
\]
The proof is completed using Lemma~\ref{lem:hoeffding:ratio}.
\end{proof}

\begin{lemma}\label{lem:hoeffding:ratio}
Assume that $a_N$, $b_N$, and $b$ are random variables defined on the same probability space such that there exist positive constants $\beta$, $B$, $C$, and $M$ satisfying
\begin{enumerate}[(i)]
    \item $|a_N/b_N|\leq M$, $\mathbb{P}$-a.s.\ and  $b \geq \beta$, $\mathbb{P}$-a.s.,
    \item For all $\epsilon>0$ and all $N\geq1$, $\mathbb{P}\left[|b_N-b|>\epsilon \right]\leq B \exp\left(-C N \epsilon^2\right)$,
    \item For all $\epsilon>0$ and all $N\geq1$, $\mathbb{P} \left[ |a_N|>\epsilon \right]\leq B \exp\left(-C N \left(\epsilon/M\right)^2\right)$.
\end{enumerate}
Then,
$$
    \mathbb{P}\left\{ \left| \frac{a_N}{b_N} \right| > \epsilon \right\} \leq B \exp{\left(-C N \left(\frac{\epsilon \beta}{2M} \right)^2 \right)} \eqsp.
$$
\end{lemma}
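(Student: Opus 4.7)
The plan is a standard ``good event/bad event'' decomposition for ratios. If $|a_N/b_N|$ exceeds $\epsilon$, then either $b_N$ has strayed far from its lower bound or $|a_N|$ is itself large. Introduce the event $\mathcal{E} = \{|b_N-b|\leq \beta/2\}$; on $\mathcal{E}$, since $b\geq \beta$ almost surely by (i), we have $b_N\geq \beta/2$, so $\{|a_N/b_N|>\epsilon\}\cap \mathcal{E}\subseteq \{|a_N|>\epsilon\beta/2\}$. A union bound then gives
\begin{equation*}
\mathbb{P}\left\{\left|\frac{a_N}{b_N}\right|>\epsilon\right\} \leq \mathbb{P}\left\{|a_N|>\frac{\epsilon\beta}{2}\right\} + \mathbb{P}\left\{|b_N-b|>\frac{\beta}{2}\right\}.
\end{equation*}

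The first term is controlled by hypothesis (iii) and yields $B\exp(-CN(\epsilon\beta/(2M))^2)$; the second, by hypothesis (ii), is at most $B\exp(-CN(\beta/2)^2)$. To collapse these into the stated form, I invoke the almost-sure bound $|a_N/b_N|\leq M$ from (i): the event $\{|a_N/b_N|>\epsilon\}$ is empty as soon as $\epsilon > M$, so it suffices to treat $\epsilon \leq M$. In that regime, $\epsilon\beta/(2M)\leq \beta/2$, and hence $\exp(-CN(\beta/2)^2)\leq \exp(-CN(\epsilon\beta/(2M))^2)$; the two contributions combine into a single term of the desired form (with $B$ absorbing a harmless factor of two).

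The argument is essentially bookkeeping, and I do not expect any serious obstacle. The one conceptual point worth flagging is the role of hypothesis (i): its almost-sure boundedness of the ratio is what permits the restriction to $\epsilon\leq M$ and thereby dominates the ``$b_N$ fluctuation'' exponential by the ``$a_N$ size'' exponential, producing a single bound at the rate $(\epsilon\beta/(2M))^2$ rather than a sum of two different exponential rates.
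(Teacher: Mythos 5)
Your argument is correct, and you should know that the paper itself gives no proof of this lemma: it simply refers to \cite{douc:garivier:moulines:olsson:2011}. The argument in that reference differs slightly from yours. There one uses the algebraic identity $\frac{a_N}{b_N}=\frac{a_N}{b_N}\cdot\frac{b-b_N}{b}+\frac{a_N}{b}$, which together with (i) gives $\left|\frac{a_N}{b_N}\right|\le \frac{M}{\beta}\left|b_N-b\right|+\frac{1}{\beta}\left|a_N\right|$ almost surely, hence $\left\{\left|a_N/b_N\right|>\epsilon\right\}\subseteq\left\{\left|b_N-b\right|>\epsilon\beta/(2M)\right\}\cup\left\{\left|a_N\right|>\epsilon\beta/2\right\}$; hypotheses (ii) and (iii) then bound both terms at the same rate $(\epsilon\beta/(2M))^2$ uniformly in $\epsilon$, with no case distinction needed. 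Your good-event decomposition $\mathcal{E}=\{|b_N-b|\le\beta/2\}$ reaches the same conclusion, the extra step being that your second term $B\exp\left(-CN(\beta/2)^2\right)$ is $\epsilon$-free, so you must invoke the almost-sure bound in (i) to discard $\epsilon>M$ before it is dominated by the target exponential; that step is valid. The one point to state precisely is the constant: both your union bound and the reference's produce $2B\exp\left(-CN(\epsilon\beta/(2M))^2\right)$, which is not literally the stated bound with the same pair $(B,C)$; one should either write the conclusion with $2B$, or degrade $C$ using that the left-hand side is at most $1$. You flag this yourself, and it is indeed harmless where the lemma is used, namely in the proof of Proposition~\ref{prop:exp:deviation}, where the constants $b_k,c_k$ are unspecified and may absorb such factors.
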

\begin{proof}
See \cite{douc:garivier:moulines:olsson:2011}.
\end{proof}

\bibliographystyle{plain}
\bibliography{./ParisEM_bib}
\end{document}